\newtheorem{theorem}{Theorem}
\title{\LARGE \bf
Transformer Temperature Management and Voltage Control in Electric Distribution Systems with High Solar PV Penetration
}
\author{Amirhossein Ghorbansarvi, Dakota Hamilton, Mads R. Almassalkhi, and Hamid R. Ossareh$^{1}$
\thanks{This material is based upon work supported by NIST (award number 70NANB22H162).}
\thanks{$^{1}$ The authors are with the Department of Electrical and Biomedical Engineering, University of Vermont, Burlington, VT, USA
{\tt\small \{aghorban, dhamilt6, malmassa, hossareh\}@uvm.edu}}%
}
\begin{document}

\maketitle
\thispagestyle{empty}
\pagestyle{empty}

\begin{abstract}

The increasing penetration of photovoltaic (PV) systems in distribution grids can lead to overvoltage and transformer overloading issues. While voltage regulation has been extensively studied and some research has addressed transformer temperature control, there is limited work on simultaneously managing both challenges. 
This paper addresses this gap by proposing an optimization-based strategy that efficiently manages voltage regulation and transformer temperature while minimizing the curtailment of PV generation. In order to make this problem convex, a relaxation is applied to the transformer temperature dynamics constraint.
We also provide analysis to determine under which conditions this relaxation remains tight. 
The proposed approach is validated through simulations, demonstrating its effectiveness in achieving the desired control objectives.

\end{abstract}

\section{Introduction}

The integration of renewable energy sources, such as solar photovoltaics (PV), into electric distribution systems represents a significant step towards sustainability and energy independence.
However, high PV penetration may also introduce significant challenges in the safe operation of the grid.
More specifically, excess power generated by distributed solar resources can cause reverse power flows which can lead to issues in voltage regulation and protection of distribution system components~\cite{Intro}. 
In particular, managing the loading of critical grid components, such as substation transformers, is increasingly important in this context as excessive overloading can accelerate degradation and potentially lead to catastrophic failures (with major financial implications). 
Advanced PV inverters can play a crucial role in mitigating these issues by curtailing active power generation and providing reactive power support. Nevertheless, minimizing PV curtailment is essential, as it represents a waste of clean, renewable energy that could reduce reliance on fossil fuels and lower greenhouse gas emissions~\cite{aleem2020}.
Thus, in this paper, we explore methods for managing distribution system voltages and transformer temperatures while minimizing curtailment of renewable PV generation.

Existing voltage control methods which leverage distributed generation (DG) resources can mainly be classified into two categories: centralized and decentralized strategies.
Centralized control strategies, such as those presented in~\cite{Cent}, focus on managing voltage constraints in active distribution systems through coordinated operation of DG resources. 
These approaches often aim to minimize system losses and reduce voltage deviations at each bus~\cite{senjyu2008}.
Centralized control is effective for relatively long-term operations (e.g. hourly), but it may not be practical for real-time operations due to its complex communication and computation requirements.

Under decentralized voltage control, DGs are managed locally, by their own controllers instead of a central one. Thus, decentralized schemes offer lower computational complexity compared to centralized voltage control. 
Various works in the literature have proposed decentralized control strategies for preventing voltage violations in distribution networks including droop-based active power curtailment~\cite{droop} and reactive power management~\cite{kundu,jahangiri2013}.
However, decentralized approaches often suffer from suboptimality due to a lack of system-wide information and coordination among controllers.
For similar reasons, they also typically can not guarantee that grid operational constraints are satisfied. 

Voltage regulation techniques have also been coupled with advanced control methods, such as model predictive control (MPC). 
In contrast to conventional control methods, MPC handles constraints in system states, inputs, and outputs by solving an optimization problem to determine control inputs.
MPC uses a mathematical model of system dynamics within this optimization problem which can incorporate forecasts of solar generation and demand. 
For example, an MPC is developed using a PV-based reactive power management scheme to minimize power loss and stabilize voltage fluctuations in~\cite{mpc1}. 
Similarly,~\cite{vancutsem2013} presents a centralized MPC-based controller designed to optimally regulate the output of DGs, including both active and reactive power, with the goal of maintaining monitored voltages within specified target ranges determined by security or economic criteria.
MPC-based schemes have also been proposed which leverage energy storage systems in addition to PV inverters to further reduce voltage fluctuations and curtailment~\cite{bat}.
Finally, in situations where an accurate system model is unavailable, data-driven methods can offer viable solutions that do not require a complete distribution network model~\cite{datadriven}. 

As mentioned earlier, power transformers are crucial components of any power system, which can suffer considerable financial and operational losses due to grid failures. Extended periods of overloading can accelerate degradation and increase the risk of serious damage to distribution system equipment, including transformers. For this reason, there is interest in approaches that actively regulate transformer loading, including dynamic transformer (temperature) rating, where the transformer hot-spot temperature is managed~\cite{EV1Temp,EV2Temp,temp1Ref,tempMads,hussein1}.


While voltage regulation and temperature management have been studied individually, few works have proposed a unified framework that addresses both challenges simultaneously. Moreover, many existing centralized MPC approaches achieve good results, but at the expense of large prediction horizons. This increases the computational burden, making real-time implementation challenging, especially in large-scale systems. 

Building on these gaps, our work proposes a centralized MPC strategy that integrates both voltage regulation and transformer temperature control while minimizing curtailment in high renewable penetration scenarios. Specifically, the choice of the objective function in our work allows for effective performance even with a very small prediction horizon. This significantly reduces the computational burden while still achieving good results in terms of minimizing curtailment and maintaining system reliability. We show that our optimization problem can be convexified by relaxing the transformer temperature model, which is quadratic in the transformer apparent power~\cite{temp}. Finally, we provide mathematical guarantees, using Karush-Kuhn-Tucker (KKT) analysis, to ensure the tightness of this relaxation under certain conditions. 

In summary, the main contributions of this work are:
\begin{enumerate}
    \item A centralized MPC strategy is proposed to simultaneously manage voltage magnitudes and substation transformer temperature while minimizing curtailment in distribution networks with high renewable penetration. 
    \item Non-convex constraints associated with the transformer temperature dynamics are relaxed to beget a convex formulation. KKT analysis then explores conditions under which this relaxation is tight.
    \item The closed-loop performance of the MPC is examined via numerical simulations for varying prediction horizon lengths. Trade-offs between MPC performance and computation time are also evaluated.  
\end{enumerate}

The remainder of the paper is organized as follows: 
Section~II discusses modeling of the distribution network, inverters, and substation transformer temperature. 
These models inform constraints of the optimization problem within the MPC, which is described in Sec.~III. 
In Sec.~IV, a relaxation of the optimization problem is introduced, and conditions under which it is tight are explored using KKT analysis. 
Numerical case studies and simulation results are presented in Sec.~V. 
Section~VI concludes and discusses future work.

\section{Distribution System Modeling}
\label{sec:modeling}

In this section, we describe the models of the distribution network and relevant components that will be used in the proposed MPC approach.

\subsection{Network model}

Consider a connected, radial distribution network with $N+1$ nodes and no lateral branches, as depicted in Fig.~\ref{fig:dist_network}. 
We denote the set of nodes in the network by $\mathcal{N}=\{0, 1, \ldots, N\}$, where node~$0$ represents the substation. 
The set of discrete time instances of interest are denoted by $t \in $ \(\mathcal{T} = \{1, \ldots, T\}\). 
We use the LinDistFlow equations to model power flow in the distribution network~\cite{kekatos,taheri}. {LinDistFlow}, which is linearized and derived from the DistFlow equations~\cite{baran} by assuming that line losses are negligible, is expressed in vector form as:
\begin{gather}
\mathbf{v}(t) = \mathbf{R}\mathbf{p}(t) + \mathbf{X}\mathbf{q}(t) + v_0(t)\mathbf{1}_N\,, \label{linVol} \\
\mathbf{p}(t) = \mathbf{A}^\top \mathbf{P}(t)\,, \\
\mathbf{q}(t) = \mathbf{A}^\top \mathbf{Q}(t)\,,
\end{gather}
where the vectors \(\mathbf{v}(t)=[v_1(t), \ldots, v_N(t)]^\top\), \(\mathbf{p}(t) = [p_1(t), \ldots, p_N(t)]^\top\), and \(\mathbf{q}(t) = [q_1(t), \ldots, q_N(t)]^\top\) consist of bus voltage magnitudes, active power injections, and reactive power injections at each node, respectively. 
The vectors of real and reactive power flows in each branch are denoted by $\mathbf{P}(t)$ and $\mathbf{Q}(t)$, respectively.
We denote the $N \times 1$ vector of ones by \(\mathbf{1}_N\), and $v_0$ is the substation voltage magnitude.
The matrices \(\mathbf{R} = \mathbf{F}\operatorname{diag}(\mathbf{r})\mathbf{F}^\top\) and \(\mathbf{X} = \mathbf{F}\operatorname{diag}(\mathbf{x})\mathbf{F}^\top\), where \(\mathbf{r} \in \mathbb{R}^N\) and \(\mathbf{x} \in \mathbb{R}^N\) are vectors of line resistances and reactances, respectively. Here, \(\operatorname{diag}(\mathbf{y})\) denotes a diagonal matrix with the elements of the vector $\mathbf{y}$ on its main diagonal. 
The matrix \(\mathbf{F} =\mathbf{A}^{-1}\), where \(\mathbf{A}\) is the reduced branch-bus incidence matrix~\cite{incidence1}. 


\subsection{Inverter model}

We assume that each bus of the network may contain both load and an inverter-interfaced PV that can regulate its active and reactive power outputs.
The active and reactive power consumed by the load at each node $j$ and time $t$ is denoted by $p_j^c(t)$ and $q_j^c(t)$, and the active and reactive power output of the inverter is given by $p_j^g(t)$ and $q_j^g(t)$. 
The active power curtailment of each inverter is denoted by $p_j^{cr}$. 
Thus, the net nodal injection at each bus can be expressed as
\begin{align}
    p_j(t) &= p_j^g(t) - p_j^{cr}(t) - p_j^c(t)\,, & \forall j \in \mathcal{N}\,, \\
    q_j(t) &= q_j^g(t) - q_j^c(t)\,, & \forall j \in \mathcal{N}\,.
\end{align}
 Furthermore, the reactive power capability of each inverter is limited by its fixed apparent power capability \(s_{j,\text{max}}\), i.e.,
\begin{equation}
    \left[q_j^g(t)\right]^2 \leq s_{j,\text{max}}^2 - \left[p_j^g(t) - p_j^{cr}(t)\right]^2\,, \quad \forall j \in \mathcal{N}\,. \label{eq:inv_app_power}
\end{equation}

\begin{figure}
\centering
\includegraphics[trim={0 2.5cm 0 0},clip,width=0.9\linewidth]{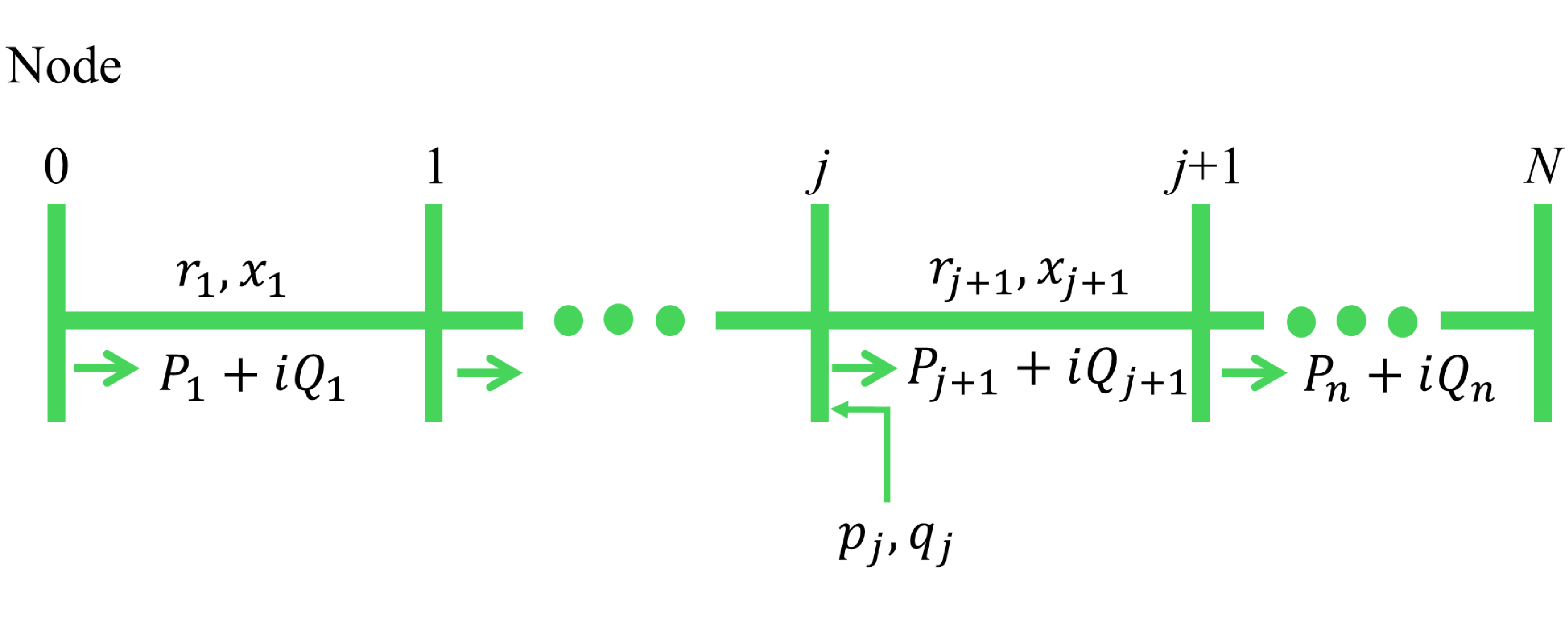}
\caption{Illustration of radial distribution network.}
\label{fig:dist_network}
\end{figure}

\subsection{Transformer temperature model}

In this paper, we model the substation transformer temperature using the following regression model~\cite{temp}, based on experimental data from field tests:
\begin{equation}
T(t+1)=aT(t) + b\left[ P^2_{\text{total}}(t) + Q^2_{\text{total}}(t) \right] + cT_a(t) + d\,, \label{eq:temp_model}
\end{equation}
where the hot-spot temperature at the next timestep, $T(t+1)$, depends on the current temperature, $T(t)$, the total active and reactive power flowing through the substation transformer, $P_{\text{total}}(t)$ and $Q_{\text{total}}(t)$, and the ambient temperature, $T_a(t)$, at time $t$. 
Expressions for $P_{\text{total}}(t)$ and $Q_{\text{total}}(t)$ are given by:
\begin{gather}
    P_{\text{total}}(t) = \sum_{j=1}^N p_j(t)\,, \\
    Q_{\text{total}}(t) = \sum_{j=1}^N q_j(t)\,.
\end{gather}
The constant coefficients
$a$, $b$, $c$ and $d$ are calculated via regression on experimental data.
Note that all coefficients are positive and have values less than one. 
\begin{figure}
\centering
    \includegraphics[width=0.95\linewidth]{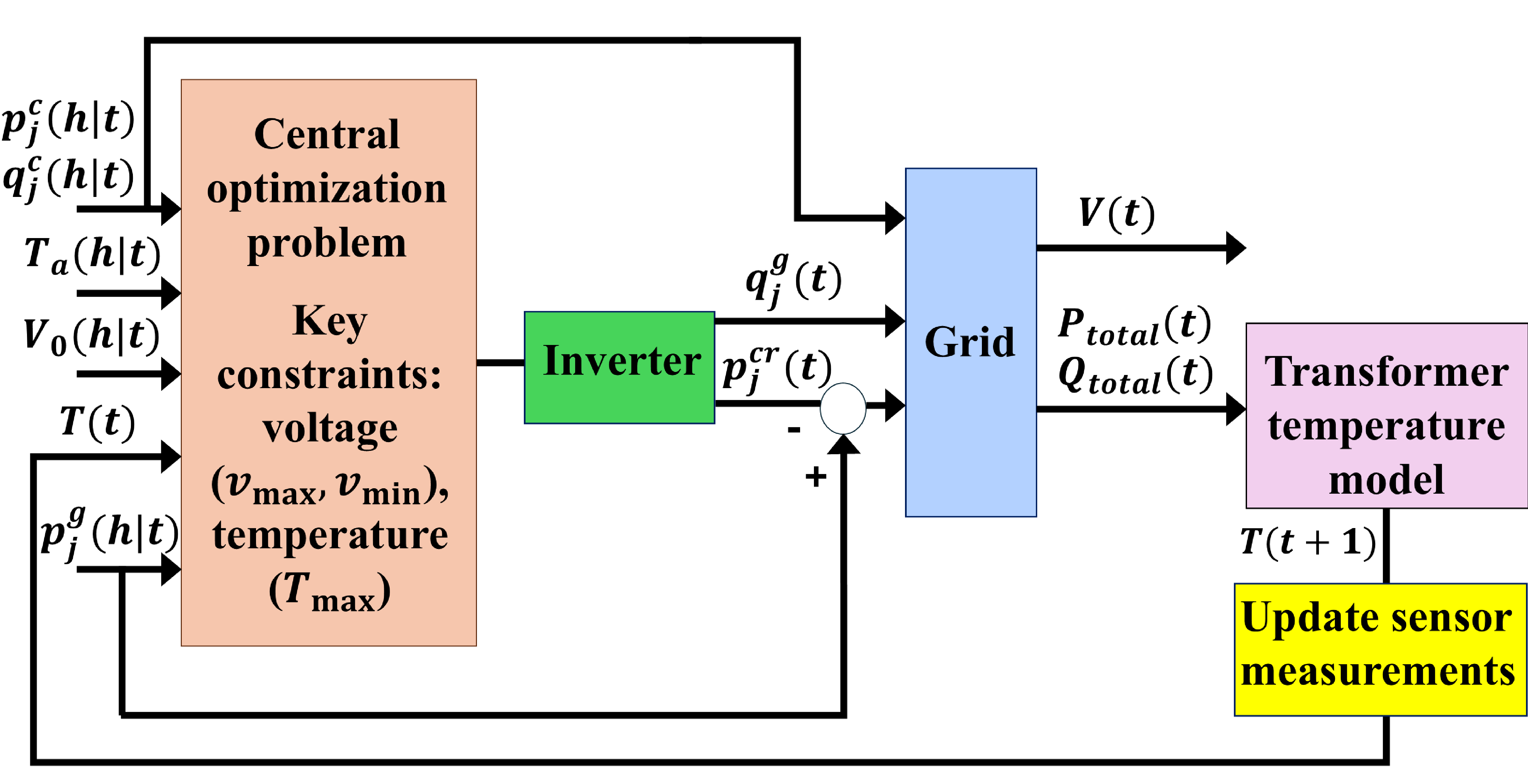}
\caption{Block diagram of centralized MPC framework.}
\label{fig:mpc_block_diagram}
\end{figure}
\section{MPC Framework and Problem Formulation}
Our goal in this work is to design a controller for distribution systems which minimizes the curtailment of solar PV while ensuring nodal voltages and the substation transformer temperature stay within prescribed limits. 
To this end, we introduce a centralized, MPC-based framework as illustrated in Fig.~\ref{fig:mpc_block_diagram}.

It is envisioned that, at each time step $t\in \mathcal{T}$, the proposed MPC would receive updated measurements from temperature sensors in the substation transformer as well as forecasts of relevant input data over a finite prediction horizon of $H$ steps.
More specifically, this input data would include forecasts of: i) active and reactive power consumption at each node, $ p_j^c(h|t)$ and $q_j^c(h|t)$; ii) available active power generation at each node, $p_j^{g}(h|t)$; iii) substation voltage magnitude, $v_0(h|t)$; and iv) ambient temperature, $T_a(h|t)$.\footnote{In this work, we assume that the centralized MPC receives perfect forecasts. Studying the impact of forecast uncertainty is our future work.}
Here, we use the notation $(h|t)$ to denote the predicted value $h$ steps ahead of the present time $t$, where $h \in\mathcal{H} = \{0, 1, \ldots, H-1\}$.


Given these inputs, the MPC determines the optimal active power curtailment and reactive power output decisions for each inverter by solving the following optimization problem:

\begin{subequations}
\begin{align}
\text{(P1)}~&\min_{ p_j^{cr}(h),q_j^g(h)} \sum_{h=0}^{H-1} \sum_{j=1}^{N} \Bigl[ \beta(p_j^{cr}( h))^2 +  (q_j^{g}( h))^2
  \Bigr]\,,\label{eq:obj_fun}\\
    \text{s.t.} ~~
    &p_j(h) = p_j^g(h|t)  - p_j^{cr}(h)  - p_j^c(h|t)\,,\nonumber \\&\hspace{4.25cm}\forall j\in\mathcal{N}\,, h \in\mathcal{H}\,,\label{eq:p_con}\\
    &q_j(h)= q_j^g(h) - q_j^c(h|t)\,,~~~~\forall j\in\mathcal{N}\,, h \in\mathcal{H}\,, \\
    &v_j(h) = \sum_{i=1}^{N}R_{ji}p_i(h) 
    + \sum_{i=1}^{N}X_{ji}q_i(h) + v_0(h|t)\,, \nonumber \\ &\hspace{4.25cm}\forall j\in\mathcal{N}\,, h \in\mathcal{H}\,, \label{eq:vol_equality} \\
    &T(h+1) = aT(h)+ b\left[ P^2_{\text{total}}(h) +Q^2_{\text{total}}(h) \right] \nonumber \\&\hspace{2.0cm} + cT_a(h|t) + d\,,~~~~~~~\forall h \in\mathcal{H}\,, \label{eq:temp_con} \\
    &P_{\text{total}}(h) = \sum_{j=1}^{N} p_j(h) \,,~~~~~~~~~~~~~~~~~\forall h \in\mathcal{H}\,,\\
     &Q_{\text{total}}(h)= \sum_{j=1}^{N} q_j(h) \,,~~~~~~~~~~~~~~~~\forall h \in\mathcal{H}\,,\label{eq:Qtot_con}\\
     &T(h+1)\le T_{\text{max}}\,,~~~~~~~~~~~~~~~~~~~~~\forall h \in\mathcal{H}\,, \label{eq:uptemp} \\
     &v_{\text{min}} \le v_j(h) \le v_{\text{max}} \,,~~~~~~~~\forall j \in \mathcal{N}\,, h \in\mathcal{H}\,, \label{eq:vol_inequality} \\
     &0 \leq p_j^{cr}(h) \leq p_j^g(h|t)\,,~~~~~~~\forall j \in \mathcal{N}\,, h \in\mathcal{H}\,, \label{eq:curt_con}\\
    &\left[q_j^g(h)\right]^2 \leq s_{j,\text{max}}^2 - \left[p_j^g(h|t) - p_j^{cr}(h)\right]^2\,, \nonumber \\& 
 \hspace{4.25cm}\forall j \in \mathcal{N}\,, h \in\mathcal{H}\,. \label{eq:inverter_S}
\end{align}
\end{subequations}
The first term of the objective function~\eqref{eq:obj_fun} serves to minimize PV curtailment across all nodes and all steps of the prediction horizon. 
This is the primary objective of the controller; however, under certain situations (e.g., when voltage constraints are not binding), the optimization problem (P1) can have non-unique solutions because the reactive power generation at each node can become a free variable.
This may lead to larger-than-necessary reactive power injections at each node, which increases the transformer temperature and degrades closed-loop performance of the MPC when the length of the prediction horizon, $H$, is too short.
Thus, in order to improve closed-loop performance, we introduce a second term to the objective that penalizes $q_j^g(h)$ as well as a positive scaling constant, $\beta$.
Further justification for including this second term and a discussion of tuning the parameter $\beta$ are provided in Sec.~\ref{sec:no_curt}.

The equality constraints~\eqref{eq:p_con}--\eqref{eq:Qtot_con} capture the LinDistFlow power flow equations and transformer temperature model as described in Sec.~\ref{sec:modeling}, where $R_{ji}$ and $X_{ji}$ denote the elements in the $j$-th row and $i$-th column of the matrices $\mathbf{R}$ and $\mathbf{X}$, respectively.
Note that the initial transformer temperature $T(0)$ in the MPC model (i.e., at step $h=0$) is updated at each time $t$ to reflect the most recent temperature sensor measurement.
Additionally, we emphasize that is important to ensure that the initial temperature at $t=0$ (i.e., at the beginning of the simulation), denoted \( T_0 \), is less than the upper bound, \( T_{\text{max}} \); otherwise, the optimization problem may be infeasible.
An upper bound on the transformer temperature, $T_\text{max}$, is enforced at each step of the prediction horizon by the inequality constraint~\eqref{eq:uptemp}, and~\eqref{eq:vol_inequality} sets upper and lowers bounds on bus voltage magnitudes, $v_\text{min}$ and $v_\text{max}$.
Finally, the inequality constraints~\eqref{eq:curt_con}--\eqref{eq:inverter_S} enforce inverter apparent power limits and ensure that PV curtailment does not exceed available PV generation at each node.

The solution of the optimization problem (P1) provides a trajectory of optimal curtailment and reactive power output setpoints for each inverter (i.e., $p_j^{cr}(h)$ and $q_j^g(h)$ for all $h\in\mathcal{H}$). 
The first of these setpoint decisions (i.e., $p_j^{cr}(0)$ and $q_j^g(0)$) is then sent to each inverter $j$ at time $t$, and the optimization problem is re-solved at time $t+1$ with updated inputs.

Note that the optimization problem formulation (P1) is non-convex due to the quadratic equality constraint~\eqref{eq:temp_con}.
This presents a challenge for real-time operation of the MPC as methods for solving non-convex programs can be computationally expensive.
Moreover, this non-convexity means that we can not guarantee that solutions of (P1) are globally optimal.
Thus, in the next section, we introduce a convex relaxation of the problem and explore conditions under which this relaxation is tight.

\section{Convex Relaxation and Tightness Guarantee}
In order to make the problem convex, we define an auxiliary variable $e(h)$, and replace the temperature dynamics constraint~\eqref{eq:temp_con} with the following relaxation:
\begin{gather}
T(h+1)=aT(h) + b\left[ e(h) \right] + cT_a(h|t) + d\,, \label{eq:relaxation1}\\
    e(h) \ge P^2_{\text{total}}(h) + Q^2_{\text{total}}(h)\,, \label{eq:relaxation2}
\end{gather}
for all $h\in\mathcal{H}$.
Note that this relaxation provides a more conservative prediction of the temperature dynamics.
That is, the transformer temperature trajectories predicted by~\eqref{eq:relaxation1}--\eqref{eq:relaxation2} will always be greater than or equal to the temperatures predicted by~\eqref{eq:temp_con} for the same values of $P_\text{total}(h)$, $Q_\text{total}(h)$, and $T_a(h|t)$.

When~\eqref{eq:relaxation2} holds with equality (i.e., the relaxation is tight), then the solution of the relaxed (convex) problem is a feasible and globally optimal solution of the original problem.
Thus, we are interested in understanding conditions under which the relaxation is tight. 
In the remainder of this section, we use KKT analysis to prove such conditions for the case when bus voltage magnitude constraints are not binding.
We leave analysis of the case when voltage magnitudes are binding as future work.

\begin{theorem}
\label{thm:main}
If, at optimality, there exists a time step $h^*$ and a node $j^*$ such that $p_{j^*}^{cr}(h^*) > 0$, and the voltage constraints \eqref{eq:vol_inequality} are not binding for all $j\in\mathcal{N}$ at time $h^*$, then the relaxation is tight for all \( h \in \{0, 1, 2, \ldots, h^*\} \).
\end{theorem}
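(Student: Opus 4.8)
The plan is to work entirely with the KKT conditions of the relaxed convex program (a convex QCQP, so under Slater's condition — guaranteed when $T_0<T_{\max}$ with a little margin — these conditions are necessary and sufficient for optimality). I would introduce multipliers $\lambda^T(h)$ for the temperature recursion \eqref{eq:relaxation1}, $\mu^e(h)\ge 0$ for the relaxation inequality \eqref{eq:relaxation2}, $\gamma(h)\ge 0$ for \eqref{eq:uptemp}, $\lambda^P(h),\lambda^Q(h),\lambda^p_j(h),\lambda^q_j(h),\lambda^v_j(h)$ for the remaining equalities, $\overline\nu_j(h),\underline\nu_j(h)\ge 0$ for the voltage bounds, $\underline\rho_j(h),\overline\rho_j(h)\ge 0$ for the curtailment bounds, and $\sigma_j(h)\ge 0$ for the inverter limit \eqref{eq:inverter_S}. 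Two stationarity identities carry the argument: differentiating in $e(h)$ gives $\mu^e(h)=-b\,\lambda^T(h)$, and differentiating in $T(h)$ gives the backward recursion $\lambda^T(h)=a\,\lambda^T(h+1)-\gamma(h)$ with terminal identity $\lambda^T(H-1)=-\gamma(H-1)$. Since $0<a<1$, $b>0$, and $\gamma(h)\ge 0$, these already force $\lambda^T(h)\le 0$ and $\mu^e(h)\ge 0$ for all $h$; more importantly, \emph{if} $\lambda^T(h^*)<0$ then the recursion propagates $\lambda^T(h)<0$ — hence $\mu^e(h)>0$ — for every $h\in\{0,\dots,h^*\}$, and complementary slackness for \eqref{eq:relaxation2} then yields $e(h)=P_{\text{total}}^2(h)+Q_{\text{total}}^2(h)$, i.e., tightness. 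So the whole theorem reduces to establishing $\mu^e(h^*)>0$.

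To get $\mu^e(h^*)>0$ I would argue by contradiction: suppose $\mu^e(h^*)=0$, so $\lambda^T(h^*)=0$. The $P_{\text{total}}(h^*)$- and $Q_{\text{total}}(h^*)$-stationarity conditions (of the form $\lambda^{P}(h^*)=-2\mu^e(h^*)P_{\text{total}}(h^*)$, similarly for $Q$) then give $\lambda^P(h^*)=\lambda^Q(h^*)=0$. The non-binding-voltage hypothesis forces $\overline\nu_j(h^*)=\underline\nu_j(h^*)=0$, so $v_j(h^*)$-stationarity gives $\lambda^v_j(h^*)=0$ for all $j$; feeding this into the $p_j(h^*)$- and $q_j(h^*)$-stationarity ($\lambda^p_j=\lambda^P+\sum_k R_{kj}\lambda^v_k$, $\lambda^q_j=\lambda^Q+\sum_k X_{kj}\lambda^v_k$) gives $\lambda^p_j(h^*)=\lambda^q_j(h^*)=0$ for all $j$. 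The $q_j^g(h^*)$-stationarity reads $2q_j^g(h^*)\bigl(1+\sigma_j(h^*)\bigr)=\lambda^q_j(h^*)=0$, forcing $q_j^g(h^*)=0$ for all $j$. Finally, since $p_{j^*}^{cr}(h^*)>0$ its lower-bound multiplier vanishes, and the $p_{j^*}^{cr}(h^*)$-stationarity collapses to
\[
2\beta\,p_{j^*}^{cr}(h^*)-2\sigma_{j^*}(h^*)\bigl(p_{j^*}^g(h^*|t)-p_{j^*}^{cr}(h^*)\bigr)+\overline\rho_{j^*}(h^*)=0 .
\]
Here the first and third terms are nonnegative and the first is strictly positive, so the middle term must be strictly negative; hence $\sigma_{j^*}(h^*)>0$ (inverter limit active at $(j^*,h^*)$) and $p_{j^*}^g(h^*|t)-p_{j^*}^{cr}(h^*)>0$. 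Combined with $q_{j^*}^g(h^*)=0$, the active inverter constraint forces $p_{j^*}^g(h^*|t)-p_{j^*}^{cr}(h^*)=s_{j^*,\max}$, i.e., the available generation strictly exceeds the inverter rating — a situation I would rule out under the mild, physically natural condition $p_j^g(h|t)\le s_{j,\max}$ (equivalently, that the curtailment at $(j^*,h^*)$ is not merely the minimum forced by the apparent-power limit). That contradicts $p_{j^*}^{cr}(h^*)>0$, establishing $\mu^e(h^*)>0$, and the first paragraph then finishes the proof.

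The step I expect to be the main obstacle is exactly this last one: excluding the scenario in which the curtailment at $(j^*,h^*)$ is induced by the inverter apparent-power constraint rather than by the temperature dynamics — indeed, without a hypothesis like $p_j^g(h|t)\le s_{j,\max}$ the conclusion can fail (a degenerate optimum with $e(h^*)$ strictly above its lower bound is possible), so surfacing and using this condition is essential. Everything upstream of it is a routine cascade of vanishing multipliers through the KKT stationarity equations, and everything downstream (propagating $\lambda^T(h^*)<0$ backward and invoking complementary slackness) is immediate from $0<a<1$, $b>0$. Minor bookkeeping remains: verifying a constraint qualification for the relaxed program and checking the boundary index cases (notably $h^*=H-1$ and the terminal temperature identity) so that the backward recursion is valid on the full range $\{0,\dots,h^*\}$.
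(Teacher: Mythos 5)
Your proposal is correct and follows essentially the same route as the paper's proof: the same KKT stationarity cascade, the same backward propagation of the temperature multipliers (you use the recursion directly where the paper unrolls it into a geometric sum, which is equivalent), and the same contradiction at $(j^*,h^*)$ via the vanishing of $\nu_P$, $\nu_Q$ and the reactive powers. The condition $p_j^g(h|t)\le s_{j,\max}$ that you correctly identify as essential is exactly the modeling assumption the paper invokes (in a footnote) to rule out the inverter-limit-induced curtailment scenario.
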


\begin{proof}
For every \( h \in \mathcal{H} \) and \( j \in \mathcal{N} \), the following constraints define primal feasibility and dual variables:
\begin{subequations}
\begin{align}
    0 &= T(h+1) - aT(h) - be(h) \nonumber \\
      &\quad - cT_a(h|t) - d \,,
      && \nu_T^{h+1} \in \mathbb{R} \quad \forall h \,,
\end{align}
\begin{flalign}
    0 &= P_{\text{total}}(h) - \sum_{j=1}^{N} p_j(h) \,,
    && \nu_P^{h} \in \mathbb{R} \quad  \forall h \,,
\end{flalign}
\begin{flalign}
    0 &= Q_{\text{total}}(h) - \sum_{j=1}^{N}q_j(h)\,,
    && \nu_Q^{h} \in \mathbb{R} \quad \forall h \,,
\end{flalign}
\begin{flalign}
    T(h+1) - T_{\max} &\le 0\,,
    && \lambda_T^{h+1} \ge 0 \quad  \forall h \,,
\end{flalign}
\begin{flalign}
    p_j^{cr}(h) - p_j^g(h|t) &\le 0\,,
    && \overline{\lambda}_j^{h} \ge 0 \quad \forall h, \forall j \,,
\end{flalign}
\begin{flalign}
    -p_j^{cr}(h) &\le 0\,,
    && \underline{\lambda}_j^{h} \ge 0  \quad \forall h, \forall j \,,
\end{flalign}
\begin{align}
   \hspace{-1cm} P^2_{\text{total}}(h) &+ Q^2_{\text{total}}(h) - e(h) \leq 0 \,,
    && \lambda_e^{h} \geq 0 \quad \forall h \,, \label{eq:e_dual}
\end{align}
\begin{flalign}
    \left[q_j^g(h)\right]^2 - s_{j,\text{max}}^2 \nonumber \\
    + \left[p_j^g(h|t) - p_j^{cr}(h)\right]^2 \leq 0 \,,
    && \lambda_{s,j}^{h} \geq 0 \quad \forall h, \forall j \,. \label{eq:inv_app_power_dual}
\end{flalign}
\end{subequations}
Here, \(\lambda\) and \(\nu\) represent the dual variables for inequality and equality constraints, respectively. 
Moreover, since the voltage constraints are assumed to be non-binding, we omit these equations as their corresponding dual variables are zero (due to complementary slackness). The general form of stationarity condition $\nabla_{y(h)} \mathcal{L}(y, \lambda, \nu) = 0 $ has to hold for each variable \(y\) at timestep \(h\), which gives:
\begin{subequations}
\begin{align}
   \nabla_{T(h+1)} \mathcal{L} =0 &\implies \nonumber \\
   &\hspace{-1.5cm}\nu_T^{h+1} = a\nu_T^{h+2} - \lambda_T^{h+1}, \quad \forall h \in \mathcal{H} \setminus \{H-1\} \,,\label{eq:nu_T} \\
  \nabla_{T(H)} \mathcal{L}=0 &\implies \nu_T^{H} =  - \lambda_T^{H} \label{eq:last_nu_T}\,, \\
    \nabla_{p_j^{cr}(h)} \mathcal{L}=0 &\implies
    2\beta p_j^{cr}(h) + \overline{\lambda}_j^{h} - \underline{\lambda}_j^{h} +  \nu_P^h \nonumber \\
   &\hspace{-1.0em}- 2\lambda_{s,j}^h\left(p_j^{g}(h|t)- p_j^{cr}(h)\right)= 0\,, 
    \forall h, \forall j \label{eq:grad_Pcr} \\
    \nabla_{e(h)} \mathcal{L}=0 &\implies -b\nu_T^{h+1} - \lambda_e^{h} = 0\,, \quad \quad \forall h\,, \label{grad_E} \\
    \nabla_{P_{\text{total}}(h)} \mathcal{L}=0 &\implies 2\lambda_e^h P_{\text{total}}(h) + \nu_P^h = 0\,,~  \forall h\,, \label{grad_Ptotal} \\
    \nabla_{Q_{\text{total}}(h)} \mathcal{L}=0 &\implies 2\lambda_e^h Q_{\text{total}}(h) + \nu_Q^h = 0\,,~ 
     \forall h\,, \label{grad_Qtotal} \\
    \nabla_{q_j^{g}(h)} \mathcal{L}=0 &\implies \nonumber \\&\hspace{-1.5em} 2q_j^{g}(h) - \nu_Q^h + 2\lambda_{s,j}^hq_j^{g}(h) = 0\,,~   \forall h,\forall j \,.\label{eq:grad_qg}
\end{align}
\end{subequations}
Note that~\eqref{eq:last_nu_T} is a special case of~\eqref{eq:nu_T} for the last timestep in the prediction horizon. 

From the complementary slackness conditions, if $\lambda_e^h > 0$, then~\eqref{eq:e_dual} must hold with equality (and thus the relaxation is tight) at time $h$.
Therefore, in order to show the relaxation is tight for all $h\le h^*$, we need to show that $\lambda_e^h > 0$ for all $h\le h^*$.

First, we will show that for all $h \le h^*$, the dual variable $\lambda_e^h \ge \lambda_e^{h^*}$. 
Therefore, if we prove that $\lambda_e^{h^*}$ is strictly positive, we can also conclude that for all $h \le h^*$, the relaxation is tight.
 From recursion on \eqref{eq:nu_T} and \eqref{eq:last_nu_T}, we have
 \begin{equation}
    \nu_T^{h+1} = - \sum_{k=h+1}^{H} a^{k-h-1}\lambda_T^k\,. \label{summation_nu}
\end{equation}
\\
Furthermore, substituting~\eqref{summation_nu} into~\eqref{grad_E}, we obtain
\begin{equation}
    \lambda_e^{h} = b  \sum_{k=h+1}^{H} a^{k-h-1} \lambda_T^k\,. \label{lamda_e_sum}
\end{equation}
Next, for all $h \le h^*$, we break up the summation in~\eqref{lamda_e_sum} as
 \begin{equation} 
  \lambda_e^{h} = b \sum_{k=h+1}^{h^*} a^{k-h-1} \lambda_T^k + b\sum_{k=h^*+1}^{H} a^{k-h^*-1} \lambda_T^k\,. \label{lamda_E_L}
\end{equation}
Note that the second term in~\eqref{lamda_E_L} is equal to $\lambda_e^{h^*}$ based on~\eqref{lamda_e_sum}.
Additionally, since $a$, $b$, and $\lambda_T^k$ are non-negative by definition, then the first term in~\eqref{lamda_E_L} is also non-negative. 
Thus, it is clear that for all $h \leq h^*$, $\lambda_e^{h} \ge \lambda_e^{h^*}$. 

Next, we prove that $\lambda_e^{h^*}$ is strictly positive when the assumptions in the theorem statement hold. 
More specifically, we consider the case where $\lambda_e^{h^*} = 0$ at optimality and will show that this leads to a contradiction. 
Substituting $\lambda_e^{h^*} = 0$ into~\eqref{grad_Ptotal} and~\eqref{grad_Qtotal}, we have
\begin{gather}
     2\lambda_e^{h^*} P_{\text{total}}(h^*) + \nu_P^{h^*} = 0  \implies \nu_P^{h^*} = 0\,, \\
     2\lambda_e^{h^*} Q_{\text{total}}(h^*) + \nu_Q^{h^*} = 0  \implies \nu_Q^{h^*} = 0\,.
\end{gather}
Then, substituting $\nu_Q^{h^*} = 0$ into~\eqref{eq:grad_qg}, we obtain
\begin{equation}
2q_j^{g}(h^*) + 2\lambda_{s,j}^{h^*}q_j^{g}(h^*) = 0\,,  \label{cont_Lambda_s}
\end{equation}
which implies that $q_j^{g}(h^*) = 0$ for all $j\in\mathcal{N}$ (since $\lambda_{s,j}^{h^*} \ge 0$). Furthermore, consider the case when 
 $\lambda_{s,j^*}^{h^*} > 0$, where $j^*$ is the node with positive PV curtailment (by the theorem statement).
 Due to complementary slackness, this implies that~\eqref{eq:inv_app_power_dual} holds with equality.
 This means that, with $q_{j^*}^{g}(h^*) = 0$, we would have
\begin{equation}
     s_{j^*,\text{max}} = p_{j^*}^g(h|t) - p_{j^*}^{cr}(h) \implies p_{j^*}^g(h|t) > s_{j^*,\text{max}}\,.
\end{equation}
However, this contradicts our modeling assumption that the inverter apparent power rating is always greater than or equal to the available active power generation.\footnote{Note that the available generation from a PV panel may exceed the apparent power rating of its inverter in practice (e.g., if the inverter is undersized). However, this excess power would always be curtailed and would not be available from the grid operator's perspective. Thus, it would not make sense to have $p_{j^*}^g(h|t) > s_{j,\text{max}}$ in this context.} 
Therefore, we conclude that, under these conditions,~\eqref{eq:inv_app_power_dual} must not be binding at node $j^*$ and its corresponding dual variable, $\lambda_{s,j^*}^{h^*}$, must be zero.

Finally, replacing $\nu_P^{h^*} = 0$, $\lambda_{s,j^*}^{h^*}=0$ into~\eqref{eq:grad_Pcr}, we have
\begin{equation}
    2\beta p_{j^*}^{cr}(h^*) + \overline{\lambda}_{j^*}^{h^*} - \underline{\lambda}_{j^*}^{h^*} = 0\,. \label{contradiction1}
\end{equation}
Recall that the theorem statement assumes $p_{j^*}^{cr}(h^*) > 0$. 
Moreover, due to complementary slackness, we have $\underline{\lambda}_j^{h^*}=0$.
Thus, the left hand side of~\eqref{contradiction1} is strictly positive, which is a contradiction. Therefore, we conclude that \(\lambda_e^{h^*} > 0\).
Thus, \(\lambda_e^{h} \ge \lambda_e^{h^*} > 0\) and the relaxation is tight for all \(h \le h^*\).
\end{proof}

Theorem~\ref{thm:main} provides specific conditions under which the relaxation in~\eqref{eq:relaxation2} is tight. 
In particular, we considered the case when voltage constraints are not binding and PV curtailment is required. 
This assumption may be reasonable in some practical situations since PV curtailment to manage excessive transformer temperature rise can often also help alleviate overvoltage issues.
However, the authors recognize that these assumptions may not always hold, and that voltage constraints may be binding at optimality.
Thus, in the next section, we explore the performance of the proposed control architecture through numerical case studies, including both scenarios with and without binding voltage constraints.

\section{Numerical Case Study Results}

In order to evaluate the effectiveness of the proposed MPC framework in managing transformer temperatures and voltages, we conducted numerical simulations on a radial distribution network.
The test feeder consists of 6 buses with no lateral branches.
The consumption at each node is modeled as a constant PQ load, and is randomly generated at each time $t\in\mathcal{T}$.
A solar PV panel and inverter are also installed at the end of the feeder (i.e., the last node).
The physical plant consists of the grid and transformer temperature model~\eqref{eq:temp_model}, and is modeled in MATLAB.
We emphasize that while the optimization model in the MPC uses the LinDistFlow approximation of the power flow equations, the plant model uses the full AC power flow equations, which are implemented using MATPOWER~\cite{matpower}.
The optimization problem (P1) is implemented in MATLAB using the CVX framework and solved using the Gurobi solver. 
Parameter values used in the numerical study are summarized in Table~\ref{tab:NumericalVals}.
\begin{table}
\centering
\vspace{1.0em}
\caption{ Parameter Values for Numerical Case Study}  
\begin{tabular}{c>{\centering\arraybackslash}p{5.3cm}>{\centering\arraybackslash}p{1cm}}
    \toprule
    \textbf{Variable} & \textbf{Description} & \textbf{Value} \\
    \midrule
    $S_{\text{base}}$  & Per unit base apparent power (MVA) & 2.5 \\
    $V_{\text{base}}$  & Per unit base voltage (kV) & 4.8 \\
     $v_{\text{max}}$  & Maximum voltage constraint (p.u.) & 1.05  \\ 
    $v_{\text{min}}$  & Minimum voltage constraint (p.u.) & 0.95  \\
    $T_{\text{max}}$  & Maximum temperature constraint ($^\circ$C) & 56  \\ 
    $T_0$ & Transformer initial temperature at $t=0$ ($^\circ$C) & 35  \\
    $T_a(h|t)$ & Ambient temperature\tablefootnote{We assume the ambient temperature is constant throughout the numerical simulations.} ($^\circ$C) & 35  \\
    $a$ & Temperature model coefficient (unitless) & 0.9972\\
    $b$ &  Temperature model coefficient ($^\circ$C\,/\,MVA$^2$) & 0.0241\\
    $c$ & Temperature model coefficient (unitless) & 0.0005\\
    $d$ & Temperature model coefficient ($^\circ$C) & 0.0931\\
    \bottomrule
    \label{tab:NumericalVals}
\end{tabular}
\end{table}

\subsection{Voltage and temperature issues without PV curtailment}

In the following case studies, we consider a high PV penetration scenario, where solar generation significantly exceeds local demand during peak sunlight hours.
The load at each node is randomly generated for each time $t\in\mathcal{T}$ within a specific range of active powers and power factors, similar to the procedure used in~\cite{kundu}.
A profile of available PV generation ($p_j^g(t)$) at node 6 is shown in Fig.~\ref{pg}. 



\begin{figure}
\centering
\includegraphics[width=0.9\linewidth]{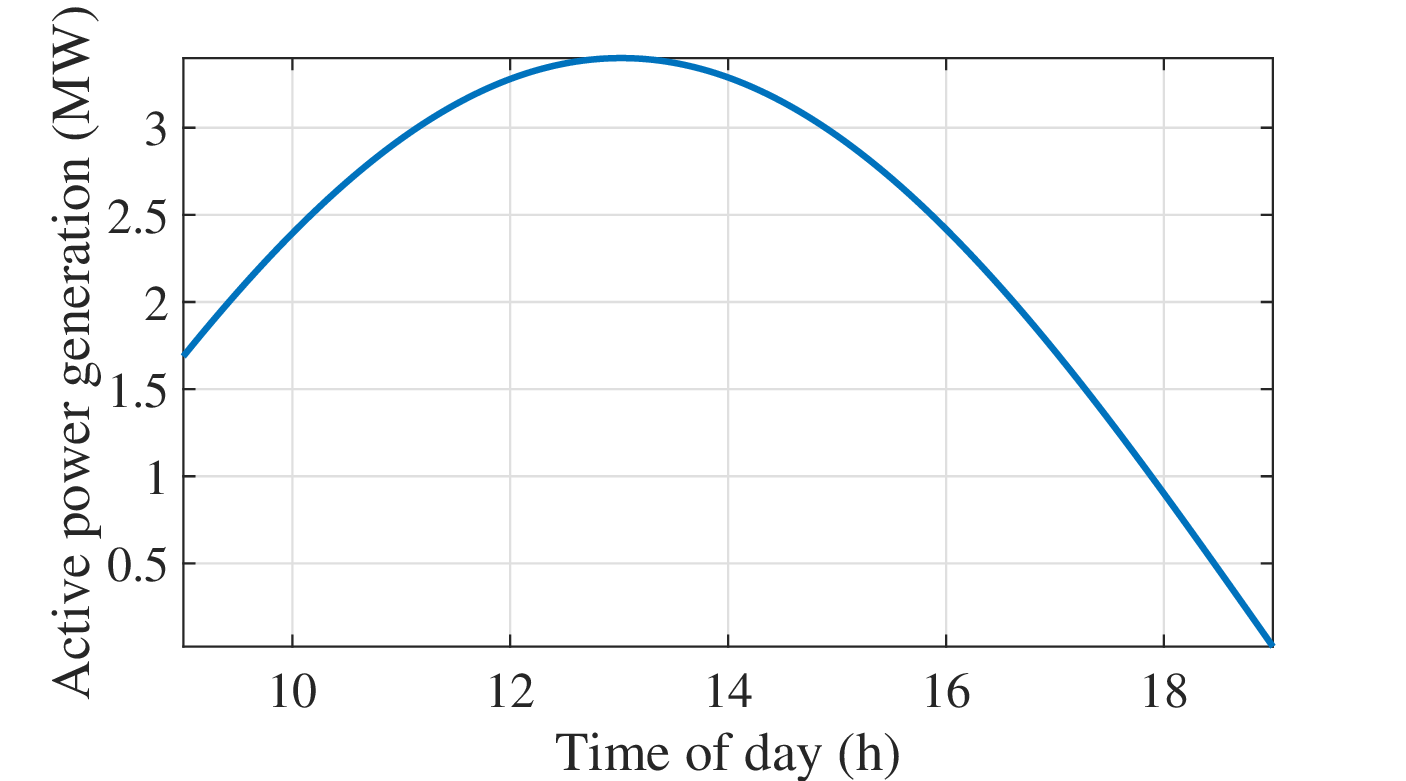}
\caption{Plot of available PV generation at node 6 during time window of interest.}
\label{pg}
\end{figure}

\begin{figure}
\centering
\includegraphics[width=0.9\linewidth]{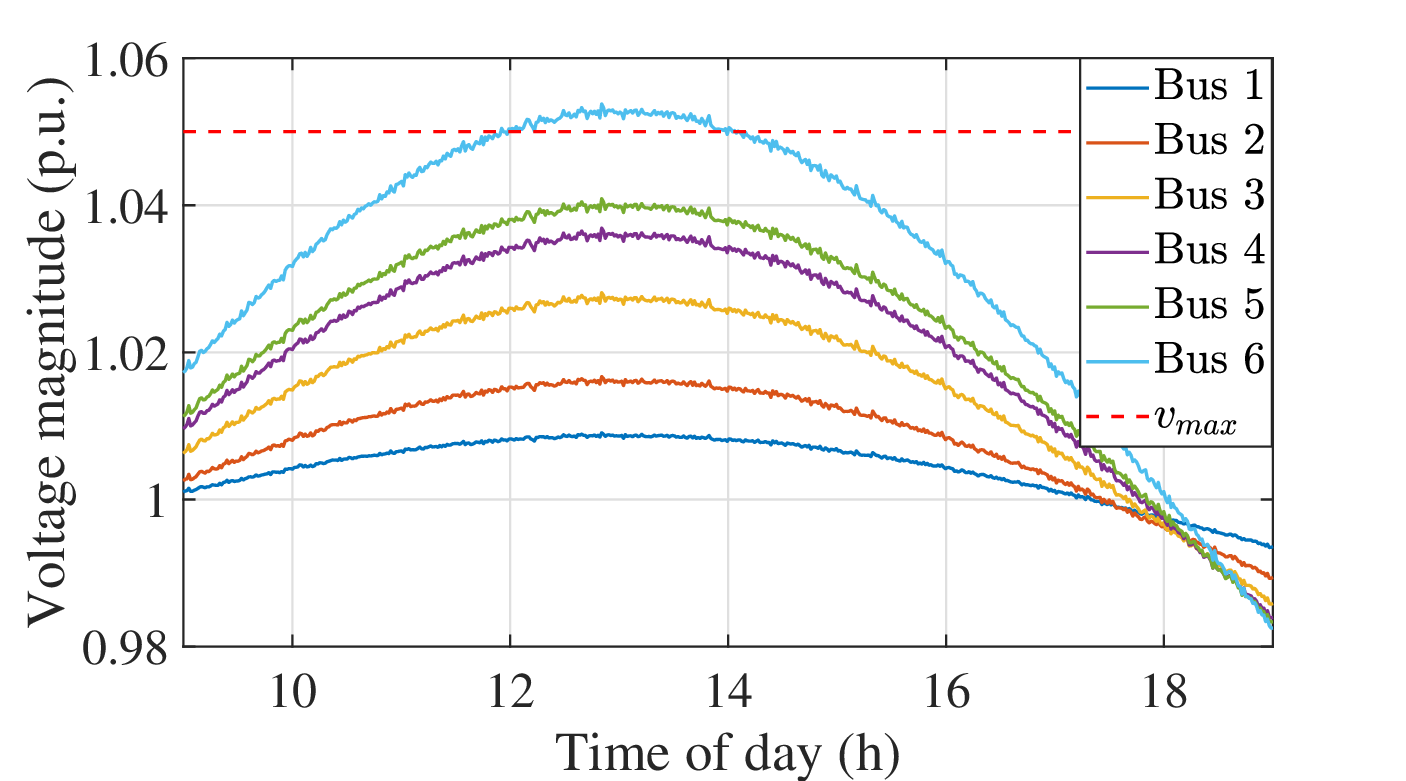}
\caption{Plot of bus voltage magnitudes in the feeder for the case when no PV curtailment is applied.}
\label{beforeVol}
\end{figure}

\begin{figure}
\centering
\includegraphics[width=0.9\linewidth]{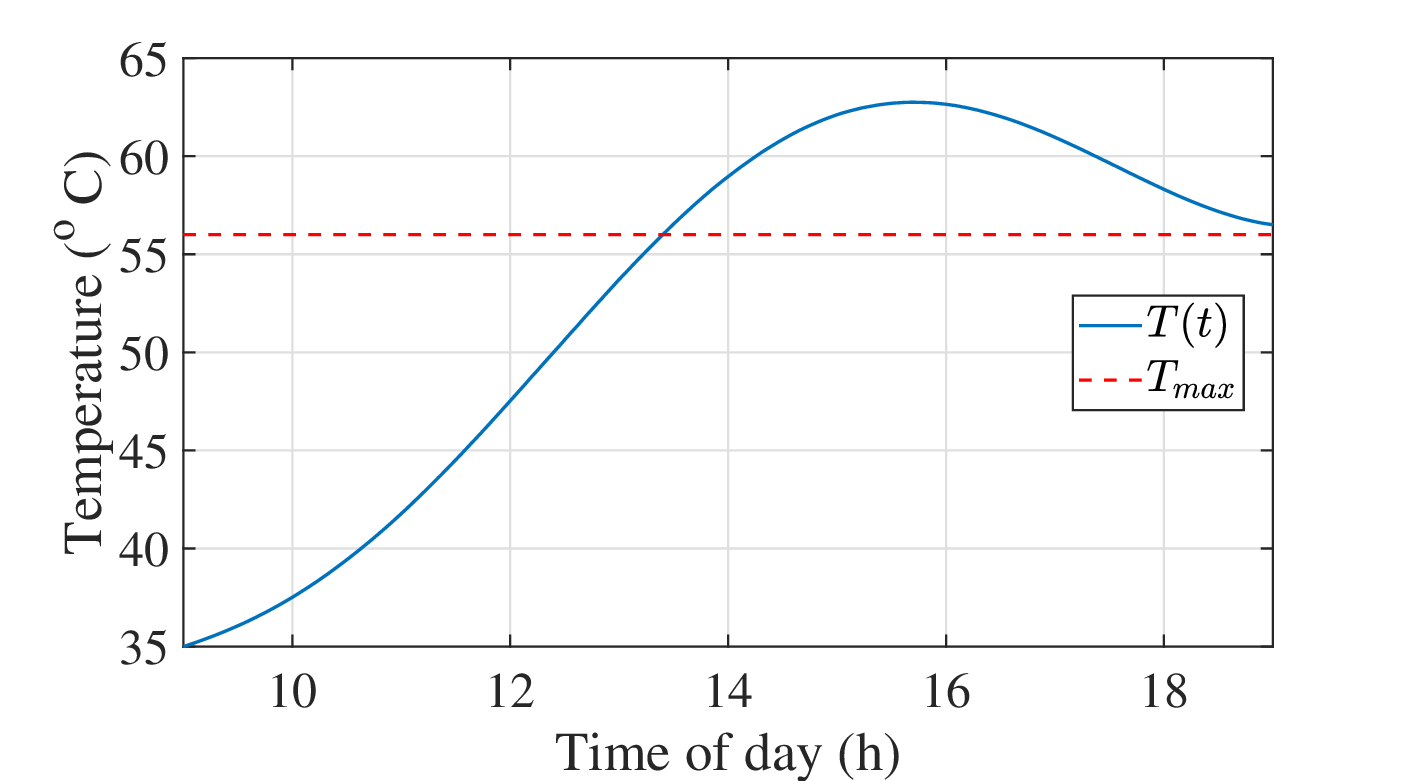}
\caption{Plot of substation transformer hot-spot temperature for the case when no PV curtailment is applied.}
\label{beforeTemp}
\end{figure}

Figures~\ref{beforeVol} and~\ref{beforeTemp} illustrate the impact of this high PV penetration when PV curtailment is not properly managed.
It can be seen from Fig.~\ref{beforeVol} that overvoltage violations occur at node 6 during peak solar output, and Fig.~\ref{beforeTemp} illustrates the rise in substation transformer temperature well above safe operating limits. 
This clearly motivates the need for control and optimization of PV curtailment.

\subsection{Impact of objective function on closed-loop performance}
\label{sec:no_curt}

Here, we numerically justify our choice of objective function in (P1).
Since our goal is to minimize PV curtailment, 
a logical first choice for the objective function would be
\begin{align}
\min_{ p_j^{cr}(h),q_j^{g}(h)} & \sum_{h=0}^{H-1} \sum_{j=1}^{N}   p_j^{cr}( h)^2 \,.  \label{ini-obj}
\end{align}
 However, using~\eqref{ini-obj} in (P1) with a short prediction horizon, $H$, can lead to poor closed-loop performance.
 This occurs because the inverter reactive power outputs, $q_j^g(h)$, become free variables (and thus depend on solver initial values) in time instances when the transformer temperature constraint is not binding during the entire prediction horizon.
 This is can be seen in Fig.~\ref{qgResult-onlyPcr} (with $H=1$~minute), where the reactive power output seems to behave erratically in the beginning of the simulation.
 When the prediction horizon is not sufficiently long enough to capture the transformer temperature dynamics accurately, these values of $q_j^g(h)$ can lead to early overloading of the transformer, as seen in Fig.~\ref{tempResult-onlyPcr}.
 This leads to the temperature constraint remaining active for a longer period of time, and requires more PV curtailment (Fig.~\ref{pcrResult-onlyPcr}).
 In order to more readily compare between cases, we introduce the total PV curtailment, $p_\text{total}^{cr}$, as a metric:
 \begin{equation}
     p_\text{total}^{cr} = \sum_{t\in\mathcal{T}}\sum_{j\in\mathcal{N}} \frac{p_j^{cr}(t)}{p_j^g(t)}\,.
 \end{equation}
 The total PV curtailment for the case when~\eqref{ini-obj} is used in (P1) is 12.4\%.

 \begin{figure}
\centering
\includegraphics[width=0.9\linewidth]{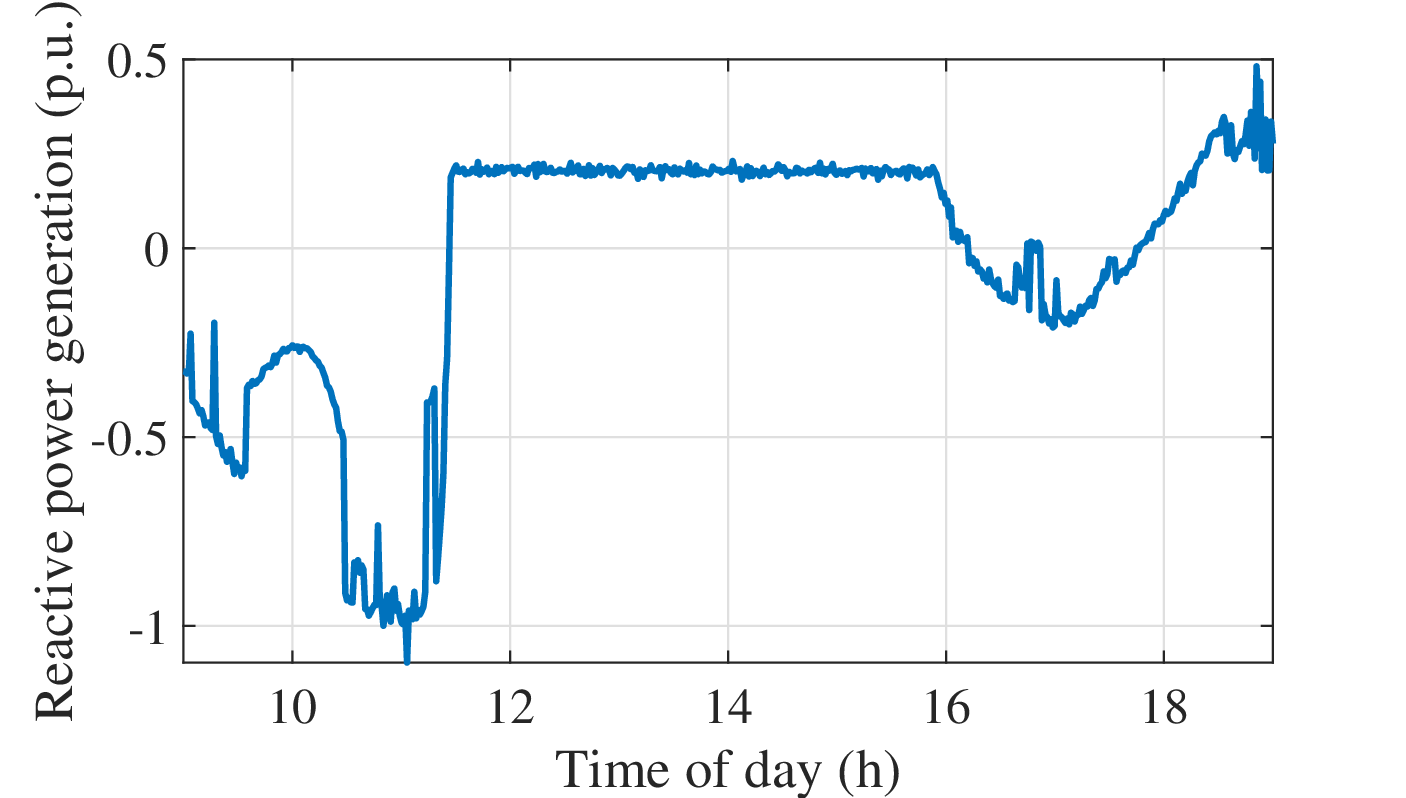}
\caption{Plot of optimal inverter reactive power output when only PV curtailment is considered in the MPC objective function.}
\label{qgResult-onlyPcr}
\end{figure}
\begin{figure}
\centering
\includegraphics[width=0.9\linewidth]{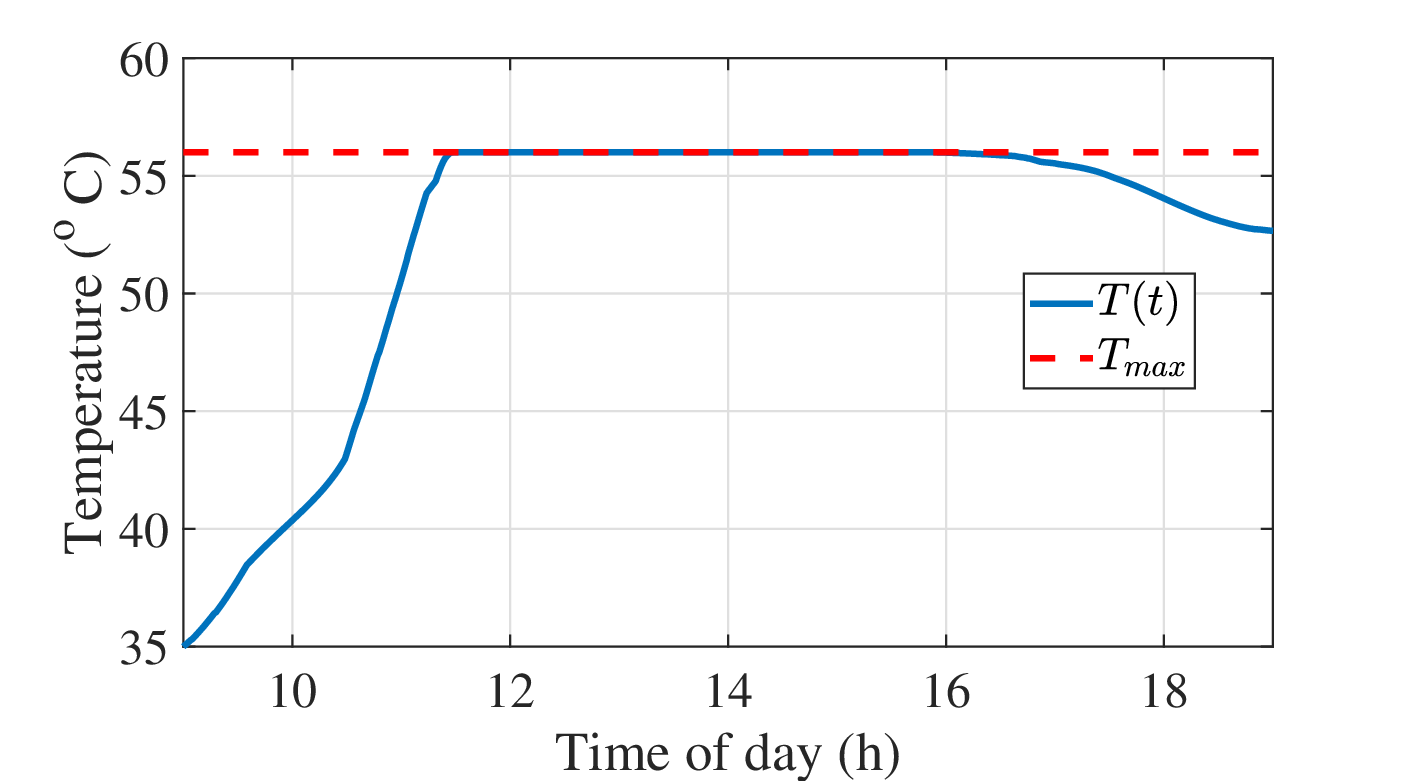}
\caption{Plot of substation transformer temperature when only PV curtailment is considered in the MPC objective function.}
\label{tempResult-onlyPcr}
\end{figure}
\begin{figure}
\centering
\includegraphics[width=0.9\linewidth]{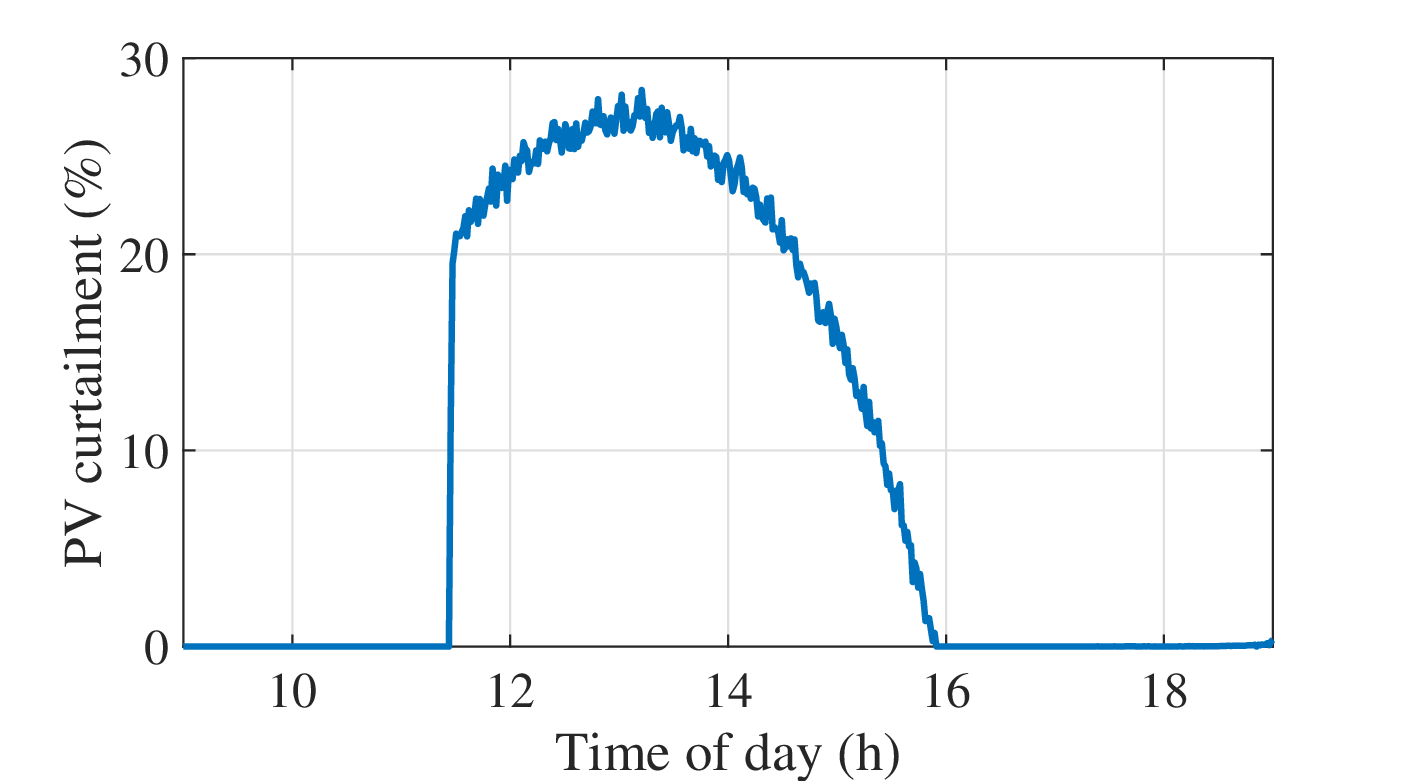}
\caption{Plot of inverter active power curtailment when only PV curtailment is considered in the MPC objective function.}
\label{pcrResult-onlyPcr}
\end{figure} 

One approach for addressing this issue could be to increase the prediction horizon such that it captures the slower temperature dynamics. 
However, increasing the prediction horizon comes at the expense of more optimization variables and a higher computational burden. 
To this end, Table~\ref{tab:comp_time} illustrates the tradeoff between total PV curtailment and average computation time required to solve (P1).\footnote{All simulations were performed on a laptop equipped with an Intel 13th Gen Core i7-1365U processor (1.80~GHz) and 32.0~GB of RAM.}

\begin{table}
\vspace{1.0em}
\centering
\caption{Impact of horizon on computation time and PV curtailment}  
\begin{tabular}{c>{\centering\arraybackslash}p{2cm}>{\centering\arraybackslash}p{2cm}>{\centering\arraybackslash}p{2cm}}
    \toprule
    $H$ & \textbf{Number of variables} & \textbf{Average comp. time} (s) & \textbf{Total curtailment} (\%) \\
    \midrule
    30  & 2402  & 2.1  & 8.0 \\
    60  & 4802  & 2.7  & 6.0 \\
    120 & 9602  & 7.3  & 4.6 \\
    \bottomrule
    \label{tab:comp_time}
\end{tabular}
\end{table}

Instead of increasing the length of the prediction horizon, we propose to 
penalize $q_j^g(h)$ in~\eqref{eq:obj_fun} to reduce reactive power setpoints that inadvertently lead to unnecessary temperature rise, as discussed previously.
In this multi-objective optimization problem, selecting appropriate values for the scaling parameter $\beta$ is crucial for ensuring the primary goal of minimizing curtailment is achieved. 
Figure~\ref{paretoANalysis} illustrates this by showing how total PV curtailment changes as $\beta$ is increased for a prediction horizon of $H=1$.
It is observed that for values of $\beta > 100$, this choice of objective function leads to performance similar to the case when $H=120$~minutes and~\eqref{ini-obj} is used.

\begin{figure}
\centering
\includegraphics[width=0.9\linewidth]{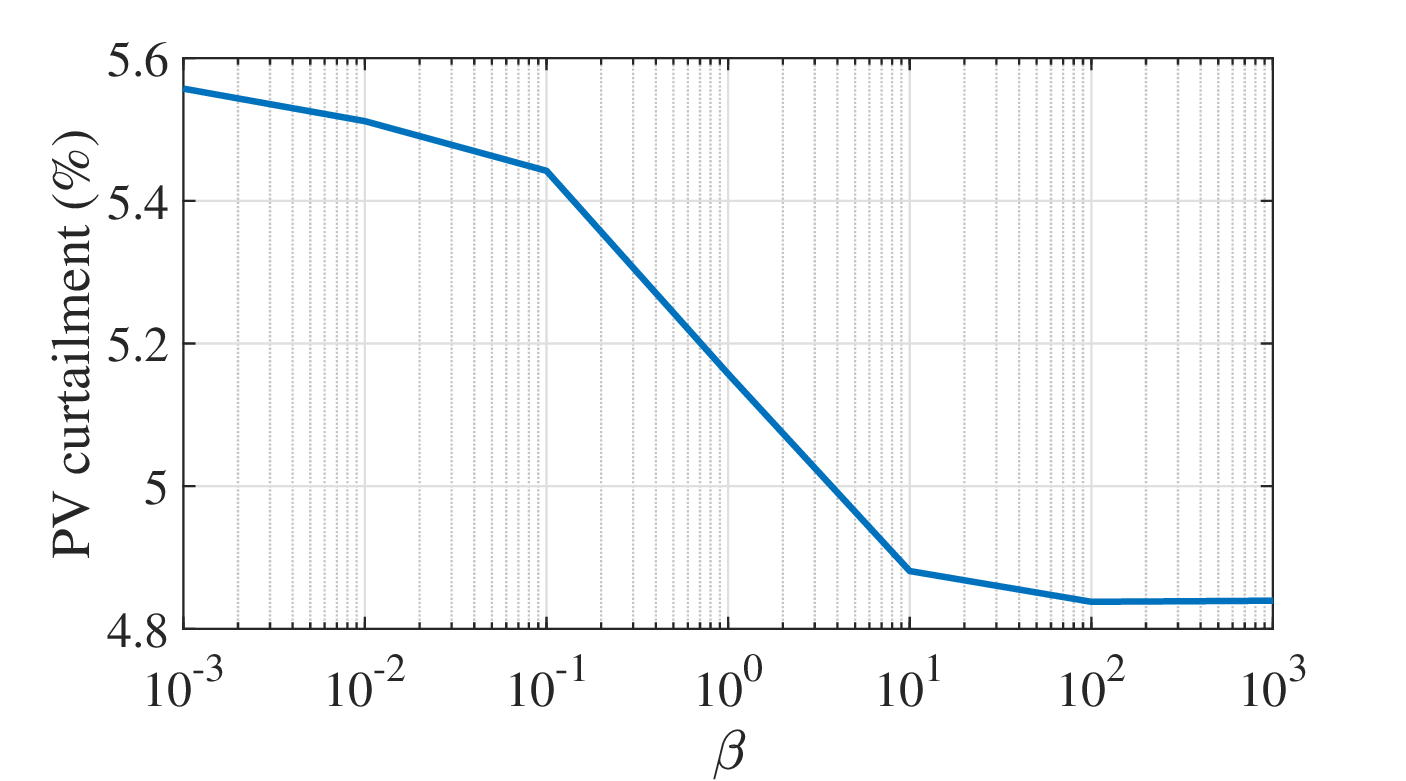}
\caption{Impact of scaling parameter, $\beta$, on total PV curtailment when $H=1$.}
\label{paretoANalysis}
\end{figure} 

\subsection{Performance of the proposed MPC approach}
Considering the proposed objective function (with $\beta=10^5$) and a prediction horizon of $H=1$, Figs.~\ref{VolResult} and~\ref{TempResult} demonstrate the successful implementation of the centralized MPC optimization strategy. 
The MPC effectively maintains bus voltage magnitudes and the substation transformer temperature within acceptable limits throughout the time window.
\begin{figure}
\centering
\includegraphics[width=0.9\linewidth]{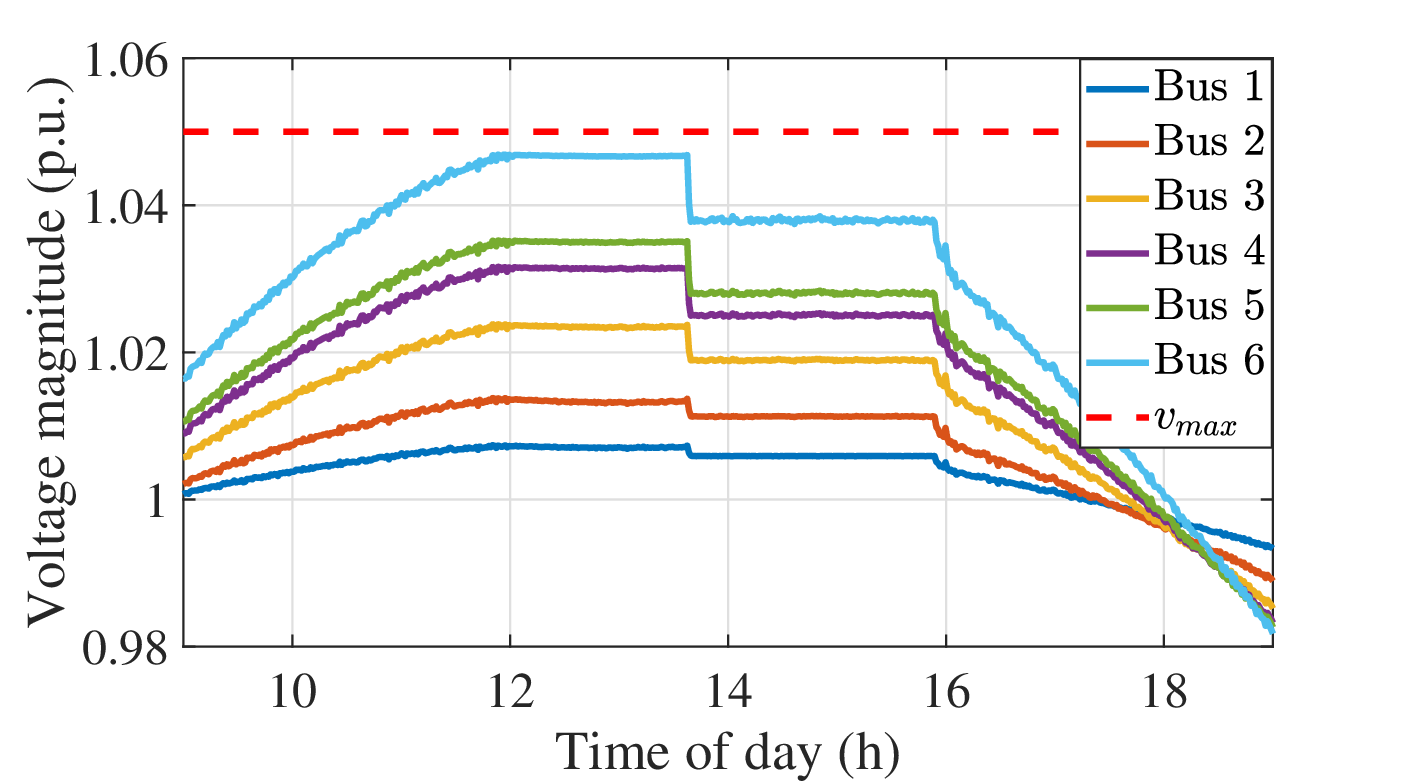}
\caption{Plot of bus voltage magnitudes in the feeder under the proposed MPC approach.}
\label{VolResult}
\end{figure}
\begin{figure}
\centering
\includegraphics[width=0.9\linewidth]{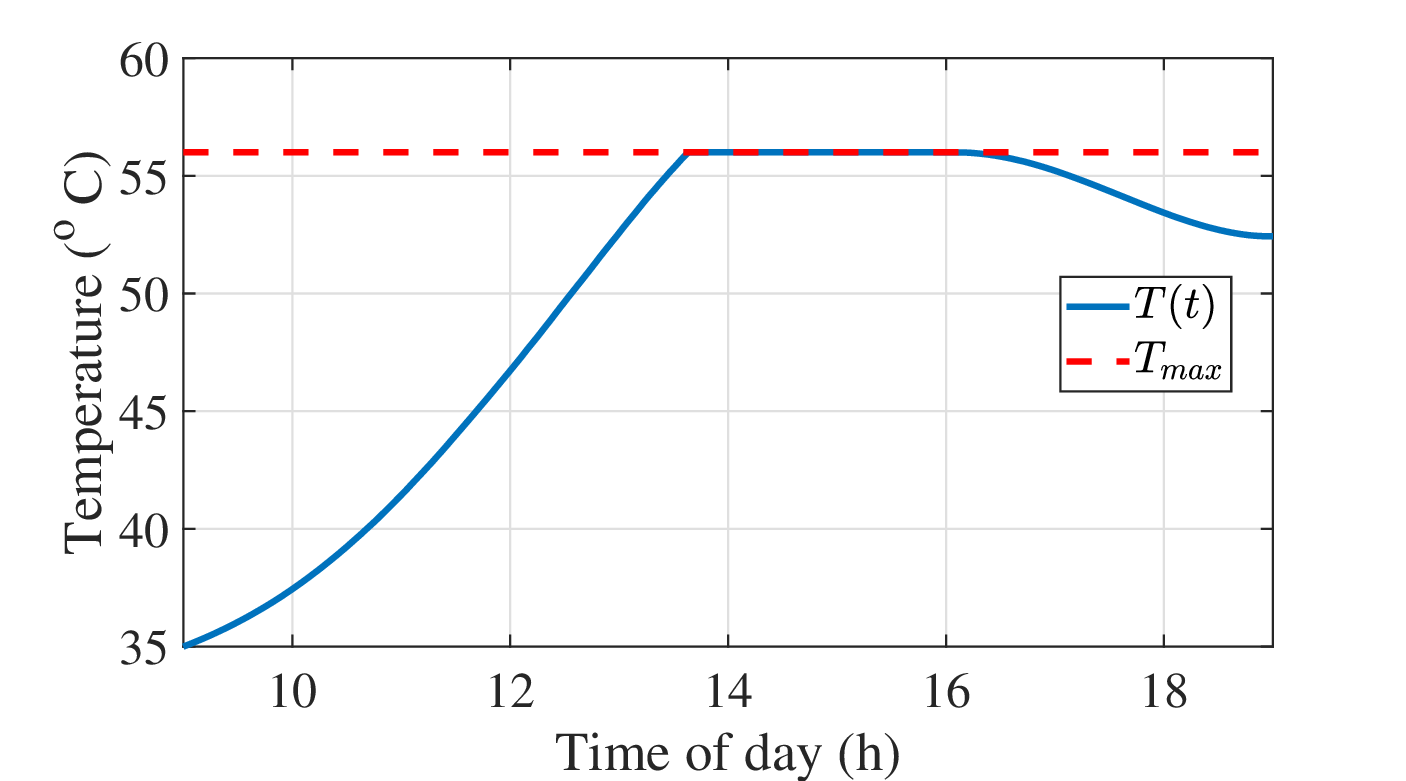}
\caption{Plot of substation transformer temperature under the proposed MPC approach.}
\label{TempResult}
\end{figure}
In particular, Fig.~\ref{TempResult} illustrates that the proposed approach prevents early overloading of transformer as the temperature constraint becomes active at $t=13$ (as compared to $t=11.5$ in the previous case). 
It is worth noting that, although it appears to be binding, the transformer temperature remains slightly below \(T_{\text{max}}\). 
This is due to model mismatch between the LinDistFlow model used in the MPC (which neglects line losses) and the AC power flow equations used in the plant model (which includes losses).
This model mismatch also explains why, in Fig.~\ref{VolResult}, the actual voltages do not hit their limits during $t=12$ to $t=13.6$ even though the predicted voltages in the MPC are binding.
This highlights the fact that the LinDistFlow equations used in the MPC always overestimates voltages and transformer temperature, which guarantees safe (yet conservative) operation despite model mismatch.

Figure~\ref{qgResult} shows the reactive power output of the inverter. 
It can be seen that the penalization of $q_j^g(h)$ leads to zero reactive power generation during the beginning of the simulation. 
As voltage constraints become binding (in the MPC prediction) around $t=12$, the MPC begins to adjust the reactive power generation of the inverter to prevent overvoltages. 
Finally, Fig.~\ref{pcrResult} depicts the curtailment of inverter active power output. 
The total PV curtailment in this case is 4.84\%, which is similar to the case when reactive power output was not penalized in the objective function and $H=120$ was used. 
This indicates that the proposed MPC is able to achieve reasonable performance with lower computational burden.
\begin{figure}
\centering
\includegraphics[width=0.9\linewidth]{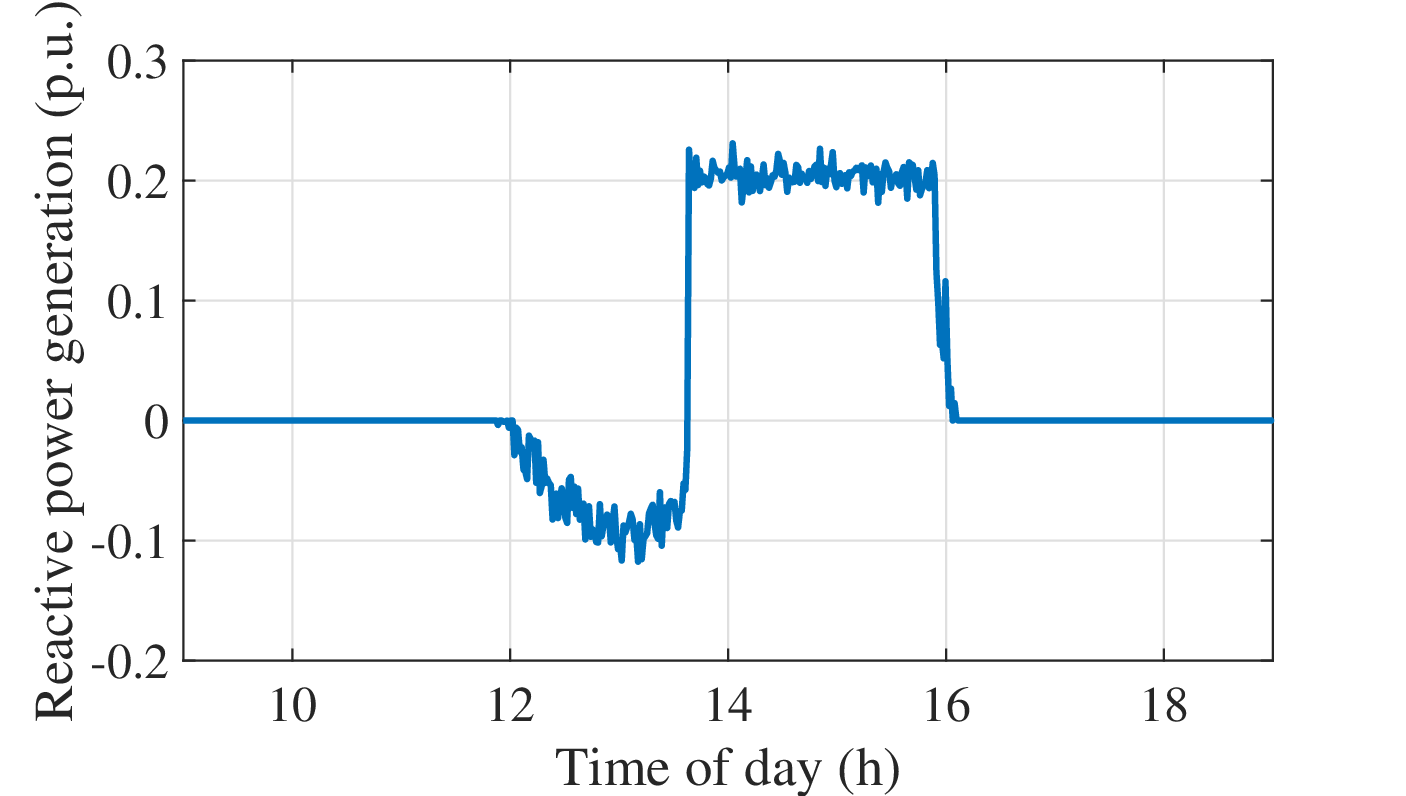}
\caption{Plot of optimal inverter reactive power output under the proposed MPC approach.}
\label{qgResult}
\end{figure}
\begin{figure}
\centering
\includegraphics[width=0.9\linewidth]{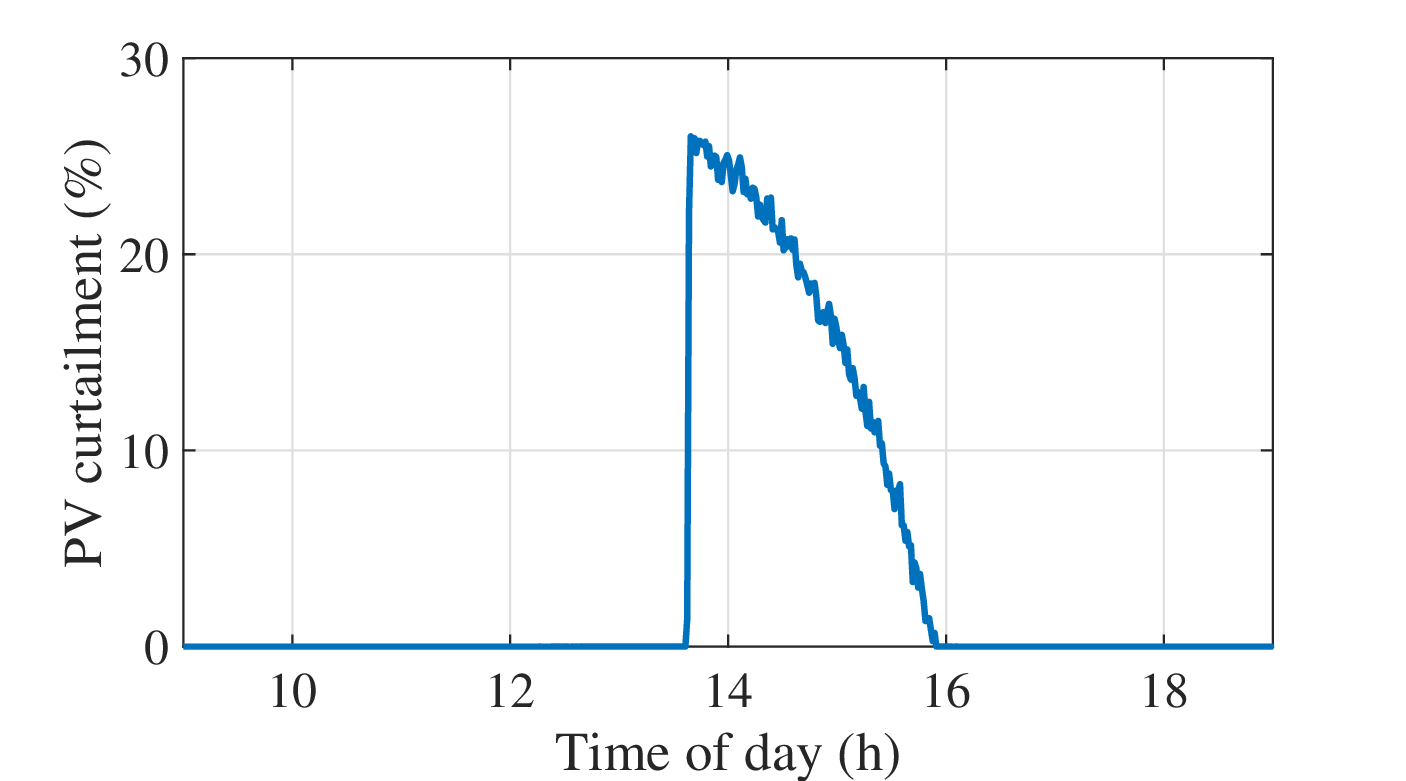}
\caption{Plot of optimal inverter active power curtailment under the proposed MPC approach.}
\label{pcrResult}
\end{figure}

\section{Conclusion and Future Work}
In this paper, a centralized MPC approach was proposed for simultaneously controlling bus voltage magnitudes and substation transformer temperatures in distribution grids with high PV penetration. 
Numerical simulations were conducted on a 6-node radial network, and the results demonstrated that both voltage and temperature could be effectively managed by the proposed controller without any constraint violations. 
Furthermore, using KKT analysis, we proved conditions under which the convex relaxation of the constraints used to model the transformer temperature dynamics is tight. 

For future work, we recognize that despite the benefits of a centralized MPC approach, real-time controller implementation can be difficult due to communication delays and computational challenges. Therefore, research into decentralized control strategies for simultaneously managing transformer temperatures and system voltages would be of value. 
However, decentralized control also comes with challenges, such as the need for effective coordination between local controllers to ensure system-wide optimality and reduce constraint violations. 
Thus, it is crucial for future work to compare the optimality of decentralized solutions with those of the centralized approach to understand the tradeoffs involved. 
Future research could also extend the KKT analysis presented herein to explore cases when voltage constraints are binding.
Finally, since solar generation and load patterns are inherently uncertain, research into the impact of imperfect forecasts on MPC performance would be of interest.
\bibliographystyle{IEEEtran}
\bibliography{Refs} 
\addtolength{\textheight}{-12cm}   




\end{document}